\newtheorem{theorem}{Theorem}
\begin{document}

\title{Finite Horizon Throughput Maximization for a Wirelessly Powered Device over a Time Varying Channel}
%\vspace{-1cm}
\author[1]{Mehdi Salehi Heydar Abad \thanks{This work was in part supported by EC H2020-MSCA-RISE-2015 programme under grant number 690893.}}
\author[1]{Ozgur Ercetin}

\affil[1]{Faculty of Engineering
and Natural Sciences, Sabanci University}

\affil[ ]{\textit{\{mehdis,oercetin\}@sabanciuniv.edu}}
\maketitle

\newtheorem{lemma}{Lemma}
\newtheorem{corollary}{Corollary}
\thispagestyle{empty}

\begin{abstract}
%Mobile edge computing (MEC) has the promise to bring the resources of the cloud closer to the clients.  A MEC server has the potential not only to provide computational resources, but also to provide energy by wireless power transfer (WPT). 

In this work, we consider an energy harvesting device (EHD) served by an access point with a single antenna that is used for both wireless power transfer (WPT) and data transfer. The objective is to maximize the expected throughput of the EHD over a finite horizon when the channel state information is only available causally.  The EHD is energized by WPT for a certain duration, which is subject to optimization, and then, EHD transmits its information bits to the AP until the end of the time horizon by employing optimal dynamic power allocation.
The joint optimization problem is modeled as a dynamic programming problem.  Based on the characteristic of the problem, we prove that a time dependent threshold type structure exists for the optimal WPT duration, and we obtain closed form solution to the dynamic power allocation in the uplink period. 
%More importantly, the distributed policy's performance does not deteriorate with respect to the number of EHDs.
\end{abstract}
\begin{IEEEkeywords}
dynamic programming, wireless power transfer.
\end{IEEEkeywords}
\section{Introduction}

IoT devices are typically powered either by finite capacity batteries or by energy harvested from the ambient energy resources. In particular, wireless power transfer (WPT) is considered as a promising technology, where RF signals are utilized as a mean to transfer power to energy harvesting IoT devices (EHDs) \cite{6951347}. In this work, we investigate a system where an access point (AP) periodically collects information from an EHD as shown in Figure \ref{fig:offload}. The AP first performs WPT to replenish the battery of the EHD for a duration that is subjected to optimization. Once this energy harvesting (EH) period ends, information transmission (IT) period begins, where the EHD transmits its data to the AP by {\em dynamically} adjusting its transmission power until the deadline. The condition of the channel varies randomly over time so that the amount of energy transferred from the AP to the EHD as well as the bits transmitted from the EHD to the AP varies randomly over time. We aim to maximize the expected throughput by the deadline. 

%Examples of such studies, but not limited to, for different scenarios under various assumptions are \cite{Wang17ad,7959896,7417552,7997477,7572018}. 

There is a recent interest in developing algorithms for efficient operation of networks with wireless powered devices. The authors in \cite{7066975}, consider a similar problem wherein a transmitter uses WPT to charge the battery of a receiver for a certain duration and then receives data over a finite horizon. However, they only considered a system model where the channel remains static over the horizon. In \cite{6678102}, an AP transmits energy to multiple receivers for a certain duration and then collects data by employing time division multiple access. The energy transfer duration and access times are optimized to maximize the throughput of the network. For a full-duplex (FD) setting where the energy transfer and data transfer operate simultaneously, \cite{6907966} maximizes the sum throughput of the network by optimizing power and time allocation. In a finite horizon, \cite{7996351} studies the throughput maximization where the AP employs non-orthogonal multiple access (NOMA) to simultaneously receive and decode interfering information. In \cite{7332956}, multiple devices harvest energy from a dedicated power station while communicating with a  separate base station to convey their data.  Time allocation and power control in the downlink and the uplink are optimized for maximizing the system energy efficiency. In \cite{7676282} the problem of long term throughput maximization for two nodes in a WPT scenario is studied to optimize the energy transfer, uplink access times, and power allocation using a Markov decision process (MDP). 

%We show that the optimal energy transmission time follows a time dependent threshold type structure. We also find an optimal power allocation strategy that maximizes the the finite horizon throughput. 
All of the aforementioned works assume that the channel state stays constant during the system operation which is not true in general. In this paper, we consider a realistic channel model where the wireless channel changes randomly during both the EH and IT periods. Also note that many of the earlier works on  finite horizon throughput maximization problem considered a dynamic program (DP) formulation and attempted to solve it \emph{numerically} offline.  This solution is suggested to be later stored in the devices as a {\em look-up table}. However, the solution of DP is usually computationally expensive, and requires a large memory space to store, which may be prohibitive for resource-constrained EHDs. Moreover,  calculating and disseminating the optimal look-up table in a network with large number of EHDs is inherently challenging and it introduces a large overhead \cite{7736112}. Hence, unlike previous works, we obtain the structure of the optimal policy and show that the optimal duration of EH period has a time-varying threshold structure. We derive analytical expressions for evaluating the time-dependent threshold. Finally, we find closed form expressions for the optimal power allocation in the IT period based on the remaining time and energy of the EHD. The main contributions of this paper can be summarized as follows:
\begin{itemize}
\item We formulate a finite horizon throughput maximization problem with joint time and power allocation by considering the random behavior of the channel in the horizon. 
\item We find a time-dependent threshold structure that dictates the optimal duration of the EH period. We give a framework to obtain the values of the time-dependent threshold.
\item In the IT period, we derive analytical expressions for the transmission power based on the residual time and energy, while the channel state information is known only causally.
\end{itemize}
\section{System Model}\label{SystemModel}

We consider a WPT scenario consisting of a single EHD and a single AP as shown in Figure \ref{fig:offload}. Time is slotted, with $t = 1,2,\ldots, T$ and a {\em time frame} has a length of $T$ slots.  Let $T$ be a prespecified parameter determined by the network administrator according to the needs of the application. Time frame is split into energy harvesting (EH) and information transmission (IT) periods. The AP is responsible for replenishing the energy of the EHD via RF transmissions in the EH period, and collecting information bits from the EHD in the IT period. The EH and IT periods are non-overlapping in time, assuming a half-duplex transmission scenario. 
 \begin{figure}[h]
  \centering
    \includegraphics[scale=.45]{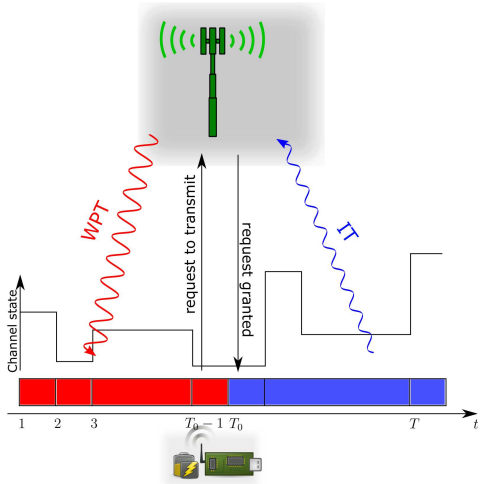}
		  \caption{System model.}
			\label{fig:offload}
\end{figure}
The wireless channel is modeled as a multi state independent and identically distributed (iid) random process with $N$ levels.  The channel gain remains constant for a duration of a  time slot but changes  randomly from one time slot to another.  Let $g(t)\in \{g_1,\ldots,g_N\}$ be the channel power gain at slot $t$. We set $\mathds{P}(g(t)=g_n)=q_n$\footnote{Note that $g_n$'s can be obtained by discretizing a continuous time channel process.}. The EHD has causal channel state information (CSI) and only during the IT period. 
%hile harvesting energy, it is blind to the channel conditions.

%The EHD is required to finish a certain task by a deadline of $T$ time slots. If the task is not finished within the deadline, it will be dropped. Let us denote by $L$ the number of bits of input data necessary for the computation. Note that in MEC scenarios $L$ and $T$ are not fixed, since different tasks with different input sizes and deadlines may be requested from the EHD. 

In the EH period, the EHD first recharges its battery for a duration of $T_0-1$ slots, which is an optimization parameter, and then, utilizes the harvested energy to deliver its bits to the AP in the subsequent IT period from $t=T_0$ to $T$ slots. The AP transmits a power beacon of $P$ watts over the wireless channel for a duration of $T_0-1$ time slots. We assume a time slot normalized set-up, and thus, we will refer to power and energy interchangeably. The energy state of the EHD at time slot $t$ is denoted by $E(t)$.  

 \begin{figure}[h]
  \centering
    \includegraphics[scale=.47]{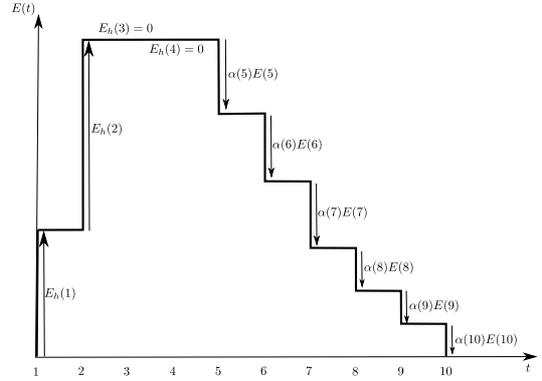}
		  \caption{An illustrative example of battery evolution, $E(t)$, where $E_h(t)$ denotes the amount of harvested energy at time $t$ and $T=10$ . The EHD harvests energy until $t=4$ and then starts transmitting to the AP at $t=5$ by utilizing $\alpha(t)$ portion of its available energy. }
			\label{fig:system}
\end{figure}

At the beginning of each slot, the EHD has the opportunity to inform the AP to stop the EH period and begin IT period.  Let time slot $T_0$ be the time slot when the EHD informs the AP. In order to develop a tractable analytical solution, we assume an empirical transmission energy model as in \cite{6574874,4674675}. Specifically, the amount of energy required to transmit $l$ bits in time slot $t$ is given by:
\begin{align}
\mathcal{E}(l,g(t)) = \frac{\lambda l^m}{ g(t)},
\end{align}
where $\lambda$ denotes the energy coefficient dictating the effects of bandwidth and noise power,
and $m > 1$ is the monomial order determined by the adopted coding scheme.

The EHD at each time slot $T_0\leq t\leq T$, utilizes an energy of $\alpha(t)\cdot E(t)$ units to transmit $l(t) = \sqrt[m]{\frac{\alpha(t)g(t)E(t)}{\lambda}}$ bits to the AP. Note that $\alpha(t)$ depends on the channel gain and the residual battery level. In the subsequent slot, the battery evolves as $E(t+1) = (1-\alpha(t))E(t)$. The overall evolution of the energy state is as follows:
\begin{align}
E(t+1)=&\left\{
\begin{array}{ll}
E(t) + \eta g(t)P,& \text{if}\  1\leq t\leq T_0-1\\
(1-\alpha(t))E(t),& \text{if}\   T_0\leq t\leq T
\end{array}
\right.,\label{E_t}
\end{align}
where $\eta$ is a constant representing the efficiency of the energy harvesting process\footnote{Note that $\eta$ in practice is a function of the received power and cannot be assumed to be a constant. However, assuming a variable $\eta$ does not change the results of the paper. Thus, for ease of presentation, we assume that $\eta$ is constant.}. An illustrative example of the battery evolution is depicted in Figure \ref{fig:system}.

Our objective is to maximize the amount of data that can be transmitted over a duration of $T-T_0$ time slots by optimizing $T_0$ and $\alpha(E(t),g(t))$, for $t=T_0,\ldots,T$.

\section{Problem Formulation}
\label{PF}
%In this section, we derive an {\em online} low complexity algorithm that aims at maximizing the amount of bits that can be transmitted over any given deadline $T$.

In this section, we formulate a joint optimization problem that finds the optimal trade-off between the EH and IT periods, and the dynamic control of transmission power during the IT period. 
%
%Note that the EHD first needs to find an optimal stopping time for EH. When the optimal amount of energy has been harvested by the EHD, it will start the offloading process. Then, the EHD needs to optimally allocate its harvested energy to be able to offload the maximum amount of data until the deadline. 
More specifically, we aim at solving the following optimization problem.
\begin{align}
\max_{T_0,\{\alpha(t)\}^{T}_{t=T_0}} &\sum^{T}_{t=T_0} \mathbf{\sqrt[m]{\frac{\alpha(t)g(t)E(t)}{\lambda}}} \label{opt_problem}\\
&0\leq \{\alpha(t)\}^{T}_{t=T_0}\leq 1,\label{constraint1}
\end{align}
where $E(t)$ evolves as in (\ref{E_t}). Note that the objective function \eqref{opt_problem} is the total number of transmitted bits in the offloading period, \eqref{constraint1} ensures that the ratio of energy consumed does not exceed the available energy. Since $g(t)$ is only available causally, the optimization problem in \eqref{opt_problem}-\eqref{constraint1} cannot be solved using offline optimization tools and an online algorithm is required for its solution. 

%\subsubsection*{Remark} A common approach to solve similar problems is to use dynamic programming (DP) to find the solution numerically, and store the optimal decisions in a look-up table for the EHD. However, solving a DP and storing the result is prohibitive for resource constrained EHDs. In particular, note that in order to solve a DP,  the state space of the problem, $(E(t),g(t))$, for $t=1,\ldots,T$ should be discretized. Let the battery state be discretized using $L$ levels. Then, the table containing optimal decisions would require $L\times N\times T$ entries. Moreover, let the action space $\alpha(t)\in [0,\ 1]$ be discretized using $M$ levels. Then, in solving the DP, for every possible state $(E(t),g(t))$, we have to evaluate $M$ possible values of $\alpha(t)$ and choose the optimum value. Hence, even though the numerical solution of DP is possible, it is practically prohibitive due to its computational complexity and high storage costs.  In the following, we aim to obtain a low complexity solution by deriving closed form representations of the optimal decision variables.

\subsection{Dynamic Energy Allocation}
In this section, we first optimize the values of $\alpha(t)$ when $T_0$ is given. In the subsequent section, using the obtained result, we give a criteria for stopping the EH process, i.e., optimizing the value of $T_0$. 

Let the offloading period begin at $T_0$ and we aim to maximize the throughput over $T-T_0$ time slots by using DP. The problem can be solved by backwards recursion starting from the last state $t=T$.  Let the instantaneous reward of choosing $\alpha(t)$ be $U_{\alpha(t)}(E(t),g(t)) $ which is the instantaneous number of bits transmitted to the AP, when the the amount of available energy at time $t$, is $E(t)$, and the channel power gain is $g(t)$. Thus,
\begin{align}
U_{\alpha(t)}(E(t),g(t)) = \sqrt[m]{\frac{\alpha(t)g(t)E(t)}{\lambda}}.
\end{align}

We denote the action-value function by $V_{\alpha}(E(t),g(t))$ which is equal to the instantaneous reward of choosing $\alpha(t)$ plus the expected number of bits that can be transmitted in the future. Hence, the action-value function evolves as,
\begin{align}
V_{\alpha(t)}(E(t),g(t))= U_{\alpha(t)}(E(t),g(t)) + \sum^{N}_{n=1} q_n V(E(t+1),g_n), 
\end{align}
where $V(E(t),g(t))$ is the value function defined as,
\begin{align}
V(E(t),g(t)) =\max_{\alpha(t)} V_{\alpha(t)}(E(t),g(t)).
\end{align}

Note that at the last time slot, i.e., $t=T$, all the energy in the battery should be used for transmission, i.e., $\alpha(T) = 1$.  Thus, it follows that, 
\begin{align}
V(E(T),g(t))=&U_1(E(T),g(T)) \nonumber\\
=& \sqrt[m]{\frac{g(T)E(T)}{\lambda}}\nonumber\\
=&\sqrt[m]{\frac{g(T)(1-\alpha(T-1))E(T-1)}{\lambda}}.
\end{align}

% We start by calculating the optimal value of $\alpha(T-1)$ by conditioning on $g(T-1)$ and $E(T-1)$ as follows

% \begin{align}
% V(T)=U_1(T) =& \sqrt[m]{\frac{g(T)E(T)}{\lambda}}\nonumber\\
% =&\sqrt[m]{\frac{g(T)((1-\alpha(T-1))E(T-1)}{\lambda}}
% \end{align}
We maximize the action-value function at $t=T-1$ by optimizing $\alpha(T-1)$ as follows,
\begin{align}
V_{\alpha}(E(T-1),g(T-1)) =& U_{\alpha}(E(T-1),g(T-1)) \nonumber\\
&\hspace{-1cm}+ \sum^{N}_{n=1} q_n V((1-\alpha(T-1))E(T-1),g_n) \nonumber\\
&\hspace{-1cm}=\sqrt[m]{\frac{g(T-1)\alpha(T-1) E(T-1)}{\lambda}} \nonumber\\
&\hspace{-1cm}+ \sum^{N}_{n=1} q_n \sqrt[m]{\frac{g_n(1-\alpha(T-1))E(T-1))}{\lambda}}.\label{V_alpha_T_1}
\end{align}
It is easy to see that (\ref{V_alpha_T_1}) is a concave function of $\alpha(T-1)$. Therefore, using the first order optimality conditions on (\ref{V_alpha_T_1}), the optimal $\alpha(T-1)$ can be calculated as follows:
\begin{align}
\alpha^*(T-1) = \frac{g(T-1)^{\frac{1}{m-1}}}{g(T-1)^{\frac{1}{m-1}}+Q(T-1)^{\frac{m}{m-1}}}\label{alfa1},
\end{align}
where
\begin{align}
Q(T-1) = \sum^{N}_{n=1} q_n\sqrt[m]{g_n}.\label{Q1}
\end{align}
The corresponding value function can also be calculated as
\small
\begin{align}
&V(E(T-1),g(T-1))\nonumber\\
&\hspace{1cm}=\sqrt[m]{\frac{E(T-1)}{\lambda}}\big(g(T-1)^{\frac{1}{m-1}} + Q(T-1)^{\frac{m}{m-1}}\big)^\frac{m-1}{m}.\label{V1}
\end{align}
\normalsize

In a similar manner as above, we can recursively calculate the optimal $\alpha(t)$ for $t=T-2,\ldots,T_0$. The result is summarized in the following theorem. 

\begin{theorem}
\label{optimal}
For any $t = T-1, \ldots,T_0$, the optimal decision is to choose
\begin{align}
\alpha^*(t) = \frac{g(t)^{\frac{1}{m-1}}}{g(t)^{\frac{1}{m-1}}+Q(t)^{\frac{m}{m-1}}},\label{alfa_k}
\end{align}
where
\begin{align}
Q(t) = \sum^{N}_{n=1} q_n \big(g_n^{\frac{1}{m-1}} + Q(t+1)^{\frac{m}{m-1}}\big)^{\frac{m-1}{m}} \label{Q_k}.
\end{align}
The corresponding value function is 
\begin{align}
V(E(t),g(t)) = \sqrt[m]{\frac{E(t)}{\lambda}}\big(g(t)^{\frac{1}{m-1}} + Q(t )^{\frac{m}{m-1}}\big)^\frac{m-1}{m}\label{V_k}
\end{align}
\end{theorem}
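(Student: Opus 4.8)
The plan is to prove both assertions simultaneously by backward induction on $t$, running from $t=T-1$ down to $t=T_0$. I would anchor the induction at the terminal slot by adopting the boundary convention $Q(T)=0$: with this convention the terminal value function $V(E(T),g(T))=\sqrt[m]{g(T)E(T)/\lambda}$ derived in the text coincides with \eqref{V_k} evaluated at $t=T$, since $(g(T)^{1/(m-1)}+0)^{(m-1)/m}=g(T)^{1/m}$, and the explicit computation in \eqref{alfa1} and \eqref{V1} shows that the case $t=T-1$ already conforms to \eqref{alfa_k} and \eqref{V_k}. The induction hypothesis is that \eqref{V_k} holds at slot $t+1$ for every channel state, i.e. $V(E(t+1),g_n)=\sqrt[m]{E(t+1)/\lambda}\,(g_n^{1/(m-1)}+Q(t+1)^{m/(m-1)})^{(m-1)/m}$.

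For the inductive step I would substitute this hypothesis, together with the battery update $E(t+1)=(1-\alpha(t))E(t)$, into the Bellman recursion for the action-value function $V_{\alpha(t)}(E(t),g(t))$. After pulling the common factor $\sqrt[m]{E(t)/\lambda}$ out of both the instantaneous reward and the expectation, the action-value function collapses to
\begin{align}
V_{\alpha(t)}(E(t),g(t)) = \sqrt[m]{\frac{E(t)}{\lambda}}\Big(\alpha(t)^{1/m}g(t)^{1/m} + (1-\alpha(t))^{1/m}\,Q(t)\Big),\nonumber
\end{align}
where the coefficient multiplying $(1-\alpha(t))^{1/m}$ is exactly $\sum_{n=1}^{N}q_n(g_n^{1/(m-1)}+Q(t+1)^{m/(m-1)})^{(m-1)/m}$, which I recognize as the definition of $Q(t)$ in \eqref{Q_k}. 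This is the heart of the argument: it is the self-similar structure, namely that the value function keeps the shape $\sqrt[m]{E/\lambda}\cdot(\,\cdot\,)^{(m-1)/m}$ at every slot, that both closes the induction and forces the stated recursion for $Q(t)$.

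It then remains to carry out the maximization over $\alpha(t)\in[0,1]$. Since $m>1$ the exponent $1/m$ lies in $(0,1)$, so $\alpha(t)^{1/m}$ and $(1-\alpha(t))^{1/m}$ are both concave and $V_{\alpha(t)}$ is concave in $\alpha(t)$; the first-order condition therefore returns the global maximizer. Setting the derivative to zero gives $\alpha(t)^{(1-m)/m}g(t)^{1/m}=(1-\alpha(t))^{(1-m)/m}Q(t)$, and solving for $\alpha(t)$ yields \eqref{alfa_k}. Finally I would substitute $\alpha^*(t)$ back; writing $A=g(t)^{1/(m-1)}$ and $B=Q(t)^{m/(m-1)}$, the bracketed term simplifies because $A^{1/m}g(t)^{1/m}=A$ and $B^{1/m}Q(t)=B$, so the bracket becomes $(A+B)^{-1/m}(A+B)=(A+B)^{(m-1)/m}$, which is precisely \eqref{V_k}. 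The only delicate point is keeping the fractional exponents straight in this back-substitution; the rest is a routine concave maximization, so I do not anticipate a genuine obstacle beyond that bookkeeping.
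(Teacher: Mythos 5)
Your proof is correct and takes essentially the same route as the paper: backward induction anchored at the base case $t=T-1$ (equations \eqref{alfa1}, \eqref{Q1}, \eqref{V1}), with the hypothesis \eqref{V_k} at $t+1$ substituted into the Bellman recursion, followed by the concave maximization over $\alpha(t)$ and back-substitution — the very steps the paper sketches and defers to its arXiv version, which you have simply carried out in full. Your boundary convention $Q(T)=0$ is also consistent, since it reproduces \eqref{Q1} from the recursion \eqref{Q_k}.
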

\begin{proof}
The proof is by induction. The theorem is true for the base case, i.e., time slot $T-1$, as shown in (\ref{alfa1}), (\ref{Q1}), and (\ref{V1}). By assuming that (\ref{alfa_k}), (\ref{Q_k}), and (\ref{V_k}) is true for time slot $t+1$. The detailed proof can be found in \cite{arxiv}.
\end{proof}

Theorem \ref{optimal} gives a framework to dynamically allocate energy at each time slot $t\geq T_0$. The closed form expressions derived in (\ref{alfa_k})-(\ref{V_k}) significantly simplify the procedure to optimize $T_0$. We will use these to find an structure for the optimal stopping time for the EH period in the subsequent section.

\subsection{Optimal Stopping time for the EH Process}

In the following, we derive the optimal stopping time for the EH process, i.e., optimizing $T_0$ in \eqref{opt_problem}-\eqref{constraint1}. Recall that the EHD accumulates energy up to some time $t$, and then stops the EH process to start IT period. Also, recall that during EH, the EHD is blind to the channel conditions. If the EHD stops the EH process at time $t$, then the expected number of bits that can be transmitted is
\begin{align}
\sum^{N}_{n=1} q_n V(E(t),g_n) &= \sum^{N}_{n=1} q_n \sqrt[m]{\frac{E(t)}{\lambda}}\big(g_n^{\frac{1}{m-1}} + Q(t)^{\frac{m}{m-1}}\big)^\frac{m-1}{m}\nonumber\\
& = \sqrt[m]{\frac{E(t)}{\lambda}}Q(t-1). \label{V_avg}
\end{align}
Note that  (\ref{V_avg}) follows from (\ref{Q_k}).

Let $J_t(E(t))$, $t=1,\ldots,T$ be the maximum expected number of bits that can be transmitted if the EH process is stopped at time $t$, and the amount of available energy is $E(t)$. At any time $t$, the EHD will either stop the EH process or continue the process. The optimal stopping time for the EH process can be formulated as
\begin{align}
\max_{t\leq T}\,\, J_t(E(t)),
\end{align}
where
\begin{align}
J_t(E(t)) = \max\left(\sqrt[m]{\frac{E(t)}{\lambda}}Q(t-1), \mathbb{E}(J_{t+1}(E(t+1))|E(t))\right).
\end{align}

The problem can be formulated as a DP and solved for every possible $E(t)$ and $t$. Before proceeding, we need the following lemma.
\begin{lemma}
\label{lemma:Q_decreasing}
$Q(t)$, defined in (\ref{Q_k}) is a monotonically decreasing function in $t$.
\end{lemma}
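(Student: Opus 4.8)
The plan is to view the backward recursion \eqref{Q_k} as the repeated application of a single scalar map and to exploit the elementary fact that iterating a monotone increasing function produces a monotone sequence whose direction is fixed by its first step. Define, for $x \ge 0$,
\begin{align}
f(x) = \sum_{n=1}^N q_n \big(g_n^{\frac{1}{m-1}} + x^{\frac{m}{m-1}}\big)^{\frac{m-1}{m}},\nonumber
\end{align}
so that \eqref{Q_k} reads exactly $Q(t) = f(Q(t+1))$. The natural boundary value is $Q(T)=0$, which is consistent with \eqref{Q1}, since $f(0) = \sum_{n=1}^N q_n g_n^{1/m} = Q(T-1)$. Observe also that every $Q(t)$ is a sum of nonnegative terms, so the iterates remain in $[0,\infty)$, and $f$ need only be understood on that domain.

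First I would establish that $f$ is nondecreasing on $[0,\infty)$. Because $m>1$, all the exponents appearing in $f$ are positive, so $x\mapsto x^{m/(m-1)}$ is nondecreasing for $x\ge 0$; adding the constant $g_n^{1/(m-1)}\ge 0$ and then raising to the power $\tfrac{m-1}{m}\in(0,1)$ preserves monotonicity, and finally a sum with nonnegative weights $q_n$ of nondecreasing functions is nondecreasing. Hence $x\le y$ implies $f(x)\le f(y)$.

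With monotonicity of $f$ in hand, the lemma follows by a short backward induction. For the base case, $Q(T-1)=f(0)\ge 0 = Q(T)$, so $Q(T-1)\ge Q(T)$. For the inductive step, assume $Q(t+1)\ge Q(t+2)$; applying the nondecreasing map $f$ yields $Q(t)=f(Q(t+1))\ge f(Q(t+2))=Q(t+1)$. Therefore $Q(t)\ge Q(t+1)$ for every $t$, i.e., $Q(t)$ is monotonically decreasing in $t$, as claimed.

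The genuine content is confined to verifying that $f$ is monotone given the fractional exponents $\tfrac{1}{m-1}$, $\tfrac{m}{m-1}$, and $\tfrac{m-1}{m}$, and to confirming that the argument of $f$ never leaves $[0,\infty)$, so that these non-integer powers are well defined and order-preserving. I expect this domain-and-exponent bookkeeping to be the only delicate part; once it is settled, the monotone-iteration argument is automatic and, notably, independent of the particular channel distribution $\{q_n\}$. I would therefore direct most of the care toward the composition argument for $f$ rather than toward the induction itself.
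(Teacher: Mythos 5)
Your proof is correct in substance but takes a genuinely different route from the paper's. The paper argues directly on the ratio: substituting \eqref{Q_k} into $Q(t)/Q(t+1)$ and using $\sum_n q_n = 1$, one gets the pointwise bound $f(x)=\sum_{n=1}^N q_n\big(g_n^{\frac{1}{m-1}}+x^{\frac{m}{m-1}}\big)^{\frac{m-1}{m}} > \sum_{n=1}^N q_n\big(x^{\frac{m}{m-1}}\big)^{\frac{m-1}{m}} = x$ for every $x\ge 0$ (since $g_n>0$), so $Q(t)=f(Q(t+1))>Q(t+1)$ in a single step, with no induction and no boundary convention. You instead prove that $f$ is monotone and run a backward induction anchored at the boundary value $Q(T)=0$, which you introduce yourself (consistently with \eqref{Q1}, as you check). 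Your route buys generality --- it works for any recursion driven by a monotone scalar map, with the direction of monotonicity fixed by the first step, and as you note it is independent of $\{q_n\}$ --- at the cost of an extra layer of machinery that the direct bound makes unnecessary. One caveat matters: as written you conclude only $Q(t)\ge Q(t+1)$, whereas the paper claims, and later genuinely needs, the strict inequality $Q(t)>Q(t+1)$; in \eqref{gamma_T_1} the left-hand side has range $(1,\infty)$, so the threshold $\gamma(t)$ exists and is unique only because $Q(t-1)/Q(t)>1$ strictly. Your framework repairs this in one line: $x\mapsto x^{p}$ with $p>0$ is \emph{strictly} increasing on $[0,\infty)$, hence so is $f$ by your composition argument, and the base case is strict since $Q(T-1)=\sum_{n=1}^N q_n g_n^{1/m}>0=Q(T)$; strictness then propagates through the induction. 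With that one-word upgrade from ``nondecreasing'' to ``strictly increasing,'' your proof fully delivers what the paper uses.
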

\begin{proof}
By substituting $Q(t)$ from (\ref{Q_k}) into $\frac{Q(t)}{Q(t+1)}$, it can be shown that $Q(t)>Q(t+1)$.
\end{proof}

Note that at $t=T$, the best strategy is to stop the EH process and start IT, since otherwise no bits can be offloaded to the AP. Thus,
\begin{align}
J_T(E(T)) = \sqrt[m]{\frac{E(T)}{\lambda}} Q(T-1).
\end{align}

We continue the recursive evaluation at time slot $t =T-1$. We have,
\begin{align}
&J_{T-1}(E(T-1)) \nonumber\\
&= \max\Bigg(\sqrt[m]{\frac{E(T-1)}{\lambda}} Q(T-2),\mathbb{E}(J_T(E(T))|E(T-1))\Bigg)\nonumber\\
&= \max\Bigg(\sqrt[m]{\frac{E(T-1)}{\lambda}} Q(T-2),\nonumber\\
&\hspace{3cm}\sum^{N}_{n=1} q_n\sqrt[m]{\frac{E(T-1)+e_n}{\lambda}} Q(T-1)\Bigg),
\end{align}
where $e_n = \eta g_n P$ is the amount of harvested energy when the channel state is at level $n$. Since $Q(T-2)>Q(T-1)$ from Lemma \ref{lemma:Q_decreasing}, if $E(T-1)\geq \gamma(T-1)$ , then 
\begin{align}
\sqrt[m]{\frac{E(T-1)}{\lambda}} Q(T-2)\geq \sum^{N}_{n=1} q_n\sqrt[m]{\frac{E(T-1)+e_n}{\lambda}} Q(T-1)),
\end{align}
where $\gamma(T-1)$ is the solution to the following equation
\begin{align}
\sum^{N}_{n=1} q_n \sqrt[m]{1+\frac{e_n}{\gamma(T-1)}} = \frac{Q(T-2)}{Q(T-1)}. \label{gamma_T_1}
\end{align}
Note that $\gamma(T-1)$ admits a unique solution because the left hand side of (\ref{gamma_T_1}) is a strictly decreasing function in $\gamma(T-1)$ and its range belongs to $(1,\ \infty)$. Also, from Lemma \ref{lemma:Q_decreasing}, we know that $\frac{Q(T-2)}{Q(T-1)}>1$. 
Hence, it is optimal to stop the EH process at time $T-1$ if $E(T-1)\geq \gamma(T-1)$. This suggests that the optimal stopping times are governed by a time varying threshold type structure, where at any given time $t$, it is optimal to stop the EH process if $E(t)\geq\gamma(t)$.

In the following theorem, we give the structure of the optimal stopping policy.
\begin{theorem}
\label{theorem:threshold}
At each time slot $t$, the optimal decision is to stop the EH process if $E(t)\geq \gamma(t)$, where $\gamma(t)$ is the solution to the following equation,
\begin{align}
\sum^{N}_{n=1} q_n \sqrt[m]{1+\frac{e_n}{\gamma(t)}} = \frac{Q(t-1)}{Q(t)}\label{gamma}
\end{align}
\end{theorem}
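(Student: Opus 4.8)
The plan is to prove the statement by backward induction on $t$, viewing the stopping problem as a \emph{monotone} optimal stopping problem whose one-step look-ahead rule coincides with the threshold $\gamma(t)$ of (\ref{gamma}). Existence and uniqueness of $\gamma(t)$ follow exactly as for $\gamma(T-1)$: the left-hand side of (\ref{gamma}) is strictly decreasing in its argument with range $(1,\infty)$, while the target $Q(t-1)/Q(t)$ lies in $(1,\infty)$ by Lemma \ref{lemma:Q_decreasing}. The base cases are $t=T$ (stopping is forced) and $t=T-1$, already settled around (\ref{gamma_T_1}); the latter needs no extra structure precisely because its successor slot $T$ is itself a forced stop.

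For the inductive step I would fix $t\leq T-2$ and assume that for every $s>t$ the threshold rule is optimal, so that $J_s(E)=\sqrt[m]{E/\lambda}\,Q(s-1)$ whenever $E\geq\gamma(s)$, while $J_s(E)\geq\sqrt[m]{E/\lambda}\,Q(s-1)$ for all $E$ since stopping is always feasible. At slot $t$ the decision compares the stopping reward $R_t(E)=\sqrt[m]{E/\lambda}\,Q(t-1)$ with the continuation value $\mathbb{E}[J_{t+1}(E+e_n)]$, where $E+e_n$ is the post-harvest energy. The easy direction, $E<\gamma(t)\Rightarrow$ continue, uses only the lower bound $\mathbb{E}[J_{t+1}(E+e_n)]\geq\sum_n q_n\sqrt[m]{(E+e_n)/\lambda}\,Q(t)>R_t(E)$, whose final strict inequality is equivalent, after dividing by $\sqrt[m]{E/\lambda}\,Q(t)$, to $\sum_n q_n\sqrt[m]{1+e_n/E}>Q(t-1)/Q(t)$, i.e. $E<\gamma(t)$.

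The substantive direction is $E\geq\gamma(t)\Rightarrow$ stop, which requires a matching \emph{upper} bound on the continuation value. Here I would show that every post-harvest state re-enters the stopping region of slot $t+1$: since $e_n\geq0$ we have $E+e_n\geq E\geq\gamma(t)$, so provided $\gamma(t)\geq\gamma(t+1)$ it follows that $E+e_n\geq\gamma(t+1)$ for every $n$. The inductive hypothesis then collapses the continuation value to its one-step form $\sum_n q_n\sqrt[m]{(E+e_n)/\lambda}\,Q(t)$, and $R_t(E)\geq\mathbb{E}[J_{t+1}(E+e_n)]$ becomes, after the same division, $\sum_n q_n\sqrt[m]{1+e_n/E}\leq Q(t-1)/Q(t)$, i.e. $E\geq\gamma(t)$. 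This proves optimality of stopping above the threshold and simultaneously yields $J_t(E)=\sqrt[m]{E/\lambda}\,Q(t-1)$ there, closing the induction.

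The crux is therefore the threshold monotonicity $\gamma(t)\geq\gamma(t+1)$ invoked above. Because the left-hand side of (\ref{gamma}) is strictly decreasing, this is equivalent to $Q(t-1)/Q(t)\leq Q(t)/Q(t+1)$, a log-convexity of the sequence $Q(\cdot)$ that is strictly stronger than Lemma \ref{lemma:Q_decreasing}. I would obtain it from the recursion (\ref{Q_k}): writing $Q(s-1)=\phi(Q(s))$ with $\phi(x)=\sum_n q_n\,(g_n^{1/(m-1)}+x^{m/(m-1)})^{(m-1)/m}$, the claim reduces to $\phi(x)/x$ being decreasing in $x$, which becomes transparent after factoring $x$ out, $\phi(x)/x=\sum_n q_n\,(1+g_n^{1/(m-1)}x^{-m/(m-1)})^{(m-1)/m}$, combined with $Q(t)>Q(t+1)$ from Lemma \ref{lemma:Q_decreasing}. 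Proving this monotonicity cleanly, rather than the routine stopping-problem bookkeeping above, is the step I expect to require the most care; should a slick argument be elusive, the weaker sufficient condition $\gamma(t+1)\leq\gamma(t)+\min_n e_n$ also closes the induction and may be easier to verify.
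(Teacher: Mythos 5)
Your proof is correct, and its skeleton --- backward induction with base case $t=T-1$, collapsing $J_{t+1}$ to its stopping form, and reducing the stop-vs-continue comparison to equation (\ref{gamma}) --- is exactly the route the paper takes; the conference version merely asserts that the inductive step goes through ``similar to (20)--(21)'' and defers details to the arXiv version. What you add, and what the paper's sketch silently needs, is the justification that the collapse is legitimate for $t\leq T-2$: the induction hypothesis yields $J_{t+1}(E+e_n)=\sqrt[m]{(E+e_n)/\lambda}\,Q(t)$ only when $E+e_n\geq\gamma(t+1)$, which at the base case is free (slot $T$ is a forced stop, as you note) but in general requires the threshold monotonicity $\gamma(t)\geq\gamma(t+1)$; without it, $J_{t+1}$ could strictly exceed its stopping form at some post-harvest states and the one-step comparison defining $\gamma(t)$ would no longer certify optimality of stopping. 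Your handling of this crux is sound: since both thresholds solve equations with the same strictly decreasing left-hand side $F(\gamma)=\sum_n q_n\sqrt[m]{1+e_n/\gamma}$, the claim is equivalent to $Q(t-1)/Q(t)\leq Q(t)/Q(t+1)$, and writing $Q(t-1)=\phi(Q(t))$ with $\phi(x)/x=\sum_n q_n\bigl(1+g_n^{1/(m-1)}x^{-m/(m-1)}\bigr)^{(m-1)/m}$, which is strictly decreasing for $m>1$, combined with $Q(t)>Q(t+1)$ from Lemma \ref{lemma:Q_decreasing}, indeed gives $\gamma(t)>\gamma(t+1)$ --- so your fallback condition $\gamma(t+1)\leq\gamma(t)+\min_n e_n$ is not needed. (As a bonus, the same factoring shows $\phi(x)>x$, which is precisely Lemma \ref{lemma:Q_decreasing}, so your lemma subsumes the paper's.) Your ``easy'' direction, bounding the continuation value below by the feasibility of stopping at $t+1$, is likewise correct. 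In short: same induction as the paper, but your write-up is self-contained where the paper's proof is a pointer, and the log-convexity of $Q(\cdot)$ that you isolate is exactly the one non-routine ingredient the published sketch omits.
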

\begin{proof}
The proof is by induction. We need to show that the result of the theorem is true for $J_{t}(E(t))$ for all $t=1,\ldots,T-1$. The result of the theorem is true for the base case of $t=T-1$ in (\ref{gamma_T_1}). Assume that the theorem holds for $t+1$, i.e., if $E(t+1)\geq\gamma(t+1)$, it is optimal to stop the EH process. Using this result, similar to (20)-(21), it can be shown that the case for time slot $t$ is also true. The detailed proof can be found in \cite{arxiv}.
\end{proof}
The results established in Theorem \ref{optimal} and \ref{theorem:threshold} enables us to develop an online low complexity optimal algorithm that maximizes the expected throughput. The procedure is summarized in Algorithm \ref{alg}. 
\begin{algorithm}
\caption{Optimal offloading algorithm}\label{alg}
\begin{algorithmic}[h]
\State Initialize $Q(t)$ for $t= 0,\ldots,T-1$ using (\ref{Q_k}),
\State Initialize $\gamma(t)$ for $t = 1,\ldots,T-1$ using (\ref{gamma}),
\For{$t = 1:T$}
\If{$E(t)<\gamma(t)$}
\State continue the EH process
\Else 
\State $T_0 = t$,
\State Stop the  EH process,
\State Break
\EndIf
\EndFor
\For{$t = T_0:T$}
\State Calculate $\alpha(t)$ using (\ref{alfa_k}),
\State Transmit using $\alpha(t)E(t)$.
\EndFor
\end{algorithmic}
\end{algorithm}

%%%%%%%%%%%%%%%%%%%%%%%%%%%%%%%%%%%%%%%%%%%%%%%%%%%%%%

\section{Numerical Results}
\label{numres}
\begin{figure}
        \centering
        \begin{subfigure}[h]{0.42\textwidth}
                \includegraphics[width=\textwidth]{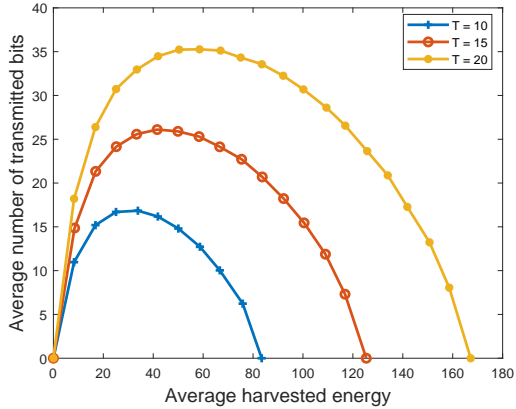}
                \caption{Energy-rate trade-off for different values of $T$.}
                \label{fig:R-E-T}
        \end{subfigure}
        \vfill
        \begin{subfigure}[h]{0.42\textwidth}
                \includegraphics[width=\textwidth]{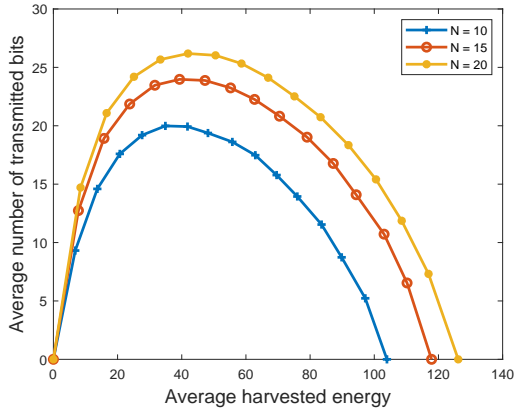}
                \caption{Energy-rate trade-off for different values of $N$. }
                \label{fig:R-E-N}
        \end{subfigure}
				\caption{Energy-rate trade-off.}
\label{fig:R-E}
\end{figure}

In this section, we first evaluate the rate-energy trade-off of the network which is the average number of bits transmitted with respect to the amount of harvested energy in a finite duration of $T$. In Figure \ref{fig:R-E-T}, for different values of $T$ the rate-energy trade-off is depicted. For different values of $N$, Figure \ref{fig:R-E-N}, illustrates the rate-energy trade-off. We observe from the figures that, spending too much time for transmitting more energy in the EH period reduces the time for IT period which in turn reduces the throughput. On the other hand, if we reduce the EH period, there would be less energy in the IT period resulting in a reduced throughput. Hence, an optimal balance is required.

\begin{figure}
        \centering
        \begin{subfigure}[h]{0.42\textwidth}
                \includegraphics[width=\textwidth]{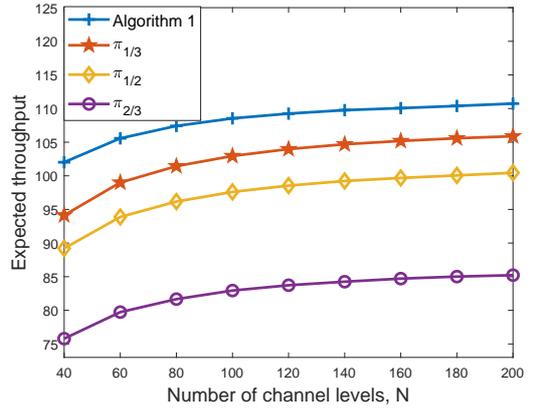}
                \caption{Expected throughput with respect to $N$.}
                \label{fig:TH_VS_N}
        \end{subfigure}
        \vfill
        \begin{subfigure}[h]{0.42\textwidth}
                \includegraphics[width=\textwidth]{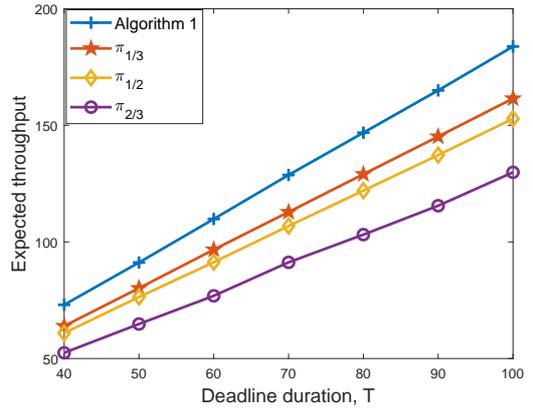}
                \caption{Expected throughput with respect to $T$. }
                \label{fig:TH_VS_T}
        \end{subfigure}
				\caption{The effect of channel discretization and deadline duration on the expected throughput.}
\label{cor_VS_R}
\end{figure}
\begin{figure}
        \centering
        \begin{subfigure}[h]{0.42\textwidth}
                \includegraphics[width=\textwidth]{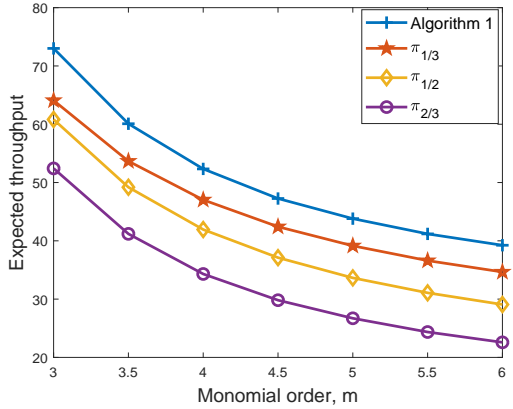}
                \caption{Expected throughput with respect to $m$.}
                \label{fig:TH_VS_m}
        \end{subfigure}
        \vfill
        \begin{subfigure}[h]{0.42\textwidth}
                \includegraphics[width=\textwidth]{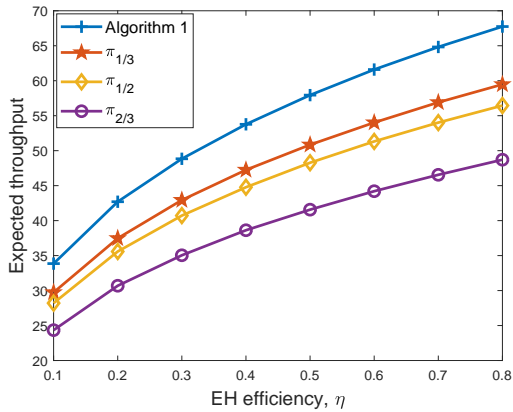}
                \caption{Expected throughput with respect to $\eta$ for $m=3$. }
                \label{fig:TH_VS_eta}
        \end{subfigure}
				\caption{The effect of the monomial order, $m$ and EH efficiency, $\eta$, on the expected throughput.}
\label{VS_m_eta}
\end{figure}

Next, we evaluate the performance of the optimal policy given in Algorithm \ref{alg} with respect to a simple policy denoted by $\boldsymbol\pi_{\beta}$. In policy $\boldsymbol\pi_{\beta}$, the EHD harvests energy for a duration of $\lfloor\beta\cdot T\rfloor$ time slots and utilizes the harvested energy uniformly in the remaining time slots for offloading its task until the deadline. The performance metric for evaluation is the expected throughput. For policy $\boldsymbol\pi_{\beta}$, throughout the simulation, we assume that $\beta \in\{ 1/3,1/2,2/3\}$.

We consider Rayleigh fading for the wireless channel and assume that $g(t)$ is exponentially distributed with mean $1$. We discretize $g(t)$, using $N$ equally spaced levels. In Figure \ref{fig:TH_VS_N}, we compare the performance of Algorithm \ref{alg} with  $\boldsymbol\pi_{\beta}$ by varying the number of discretization levels, $N$. We assume that $\lambda =0.1$, $m=3$, $P=10$, $T=50$. It can be seen that Algorithm \ref{alg} is able to outperform $\boldsymbol\pi_{\beta}$ for different values of $\beta$. An important observation from Figure \ref{fig:TH_VS_N} is that in order to achieve near-optimal performance, a sufficient number of discretization levels is required.  However, the computational complexity of numerically solving the DP quickly becomes prohibitively expensive as the number of discretization levels increase. On the contrary, increasing the discretization levels is not an issue for Algorithm \ref{alg} due to its lower computational complexity. 

Figure \ref{fig:TH_VS_T} illustrates the effect of the deadline duration, $T$, on the performance comparison of Algorithm \ref{alg} with $\boldsymbol\pi_{\beta}$. In this experiment, the number of channel discretization level is taken as $N=20$. As expected, increasing the deadline improves throughput, since more energy can be harvested and the EHD has a longer time to offload its task. It can be seen that, as the deadline duration increases, the gap between Algorithm \ref{alg} and $\boldsymbol\pi_{1/3}$, which is the best $\boldsymbol\pi_{\beta}$, also increases. 

The effect of the monomial order $m$, reflecting the adopted coding scheme, and EH efficiency ,$\eta$, on the expected throughput achieved by Algorithm 1 and $\boldsymbol\pi_\beta$ is depicted in Figure \ref{fig:TH_VS_m} and  \ref{fig:TH_VS_eta} for $\lambda =0.1$, $P=10$, $T=40$, $N=20$. It can be seen that Algorithm 1 outperforms $\boldsymbol\pi_\beta$ in both cases.

%  \begin{figure}[ht]
%   \centering
%     \includegraphics[scale=.55]{rate_energy.eps}
% 		  \caption{System model.}
% 			\label{fig:offload}
% \end{figure}

\section{Conclusions}
\label{CON}
In this work, we investigated the problem of finite horizon throughput maximization over a stochastic wireless channel when the deadline duration spans over multiple time slots with only causal CSI. We formulated the problem as a DP and by gaining insight into the DP, we reduced the dimension of the original from three to one enabling a closed form solution. By deriving closed form solutions for dynamic power allocation, and showing that the optimal stopping time for EH process follows a time varying threshold type structure, we developed a low complexity optimal algorithm, suitable for resource limited EHDs. As a future work, we will extend the results of the paper for the case of  multi-antenna APs and EHDs. Also, different performance metrics such as minimizing the task completion time and minimizing the power consumption of the AP will be addressed. 

\bibliographystyle{IEEEtran}
\bibliography{ref}

%%%%%%%%%%%%%%%%%%%

%\printbibliography[title={References}]
\end{document}